\newtheorem{thm}{Theorem}[section]
\newtheorem{lem}[thm]{Lemma}
\begin{document}

\title{Towards an Application-Aware Resource Scheduling with Carrier Aggregation in Cellular Systems}
\author{Haya Shajaiah, Ahmed Abdelhadi, and T. Charles Clancy \\
Bradley Department of Electrical and Computer Engineering\\
Hume Center, Virginia Tech, Arlington, VA, 22203, USA\\
\{hayajs, aabdelhadi, tcc\}@vt.edu
}
\maketitle

\begin{abstract}
In this paper, we introduce an application-aware approach for resource block scheduling with carrier aggregation in Long Term Evolution Advanced (LTE-Advanced) cellular networks.
%In our model, users are running elastic or inelastic traffic. We use logarithmic and sigmoidal-like utility functions to represent the applications running on the user equipment (UE). Our objective is to assign resource blocks (RB)s from multiple carriers based on a proportional fairness (PF) scheduling policy.
In our approach, users are partitioned in different groups based on the carriers coverage area. In each group of users, users equipments (UE)s are assigned resource blocks (RB)s from all in band carriers. We use a utility proportional fairness (PF) approach in the utility percentage of the application running on the UE. Each user is guaranteed a minimum quality of service (QoS) with a priority criterion that is based on the type of application running on the UE. We prove that our scheduling policy exists and therefore the optimal solution is tractable. Simulation results are provided to compare the performance of the proposed RB scheduling approach with other scheduling policies.% We present a centralized proportional fairness resource scheduling with carrier aggregation (PFRS-CA) algorithm to allocate multiple carriers resources optimally among users.
\end{abstract}

\begin{IEEEkeywords}
LTE-Advanced, Resource Block Scheduling, Carrier Aggregation, Proportional Fairness
\end{IEEEkeywords}
\pagenumbering{gobble}
% Global Parameters that can be changed:
%%%%%%%%%%%%%%%%%%%%%%%%%%%%%%%%%%%
\providelength{\AxesLineWidth}       \setlength{\AxesLineWidth}{0.5pt}%
\providelength{\plotwidth}           \setlength{\plotwidth}{8cm}% width of the axes only
\providelength{\LineWidth}           \setlength{\LineWidth}{0.7pt}%
\providelength{\MarkerSize}          \setlength{\MarkerSize}{3pt}%
\newrgbcolor{GridColor}{0.8 0.8 0.8}%
\newrgbcolor{GridColor2}{0.5 0.5 0.5}%
%%%%%%%%%%%%%%%%%%%%%%%%%%%%%%%%%%%
%%%%%%%%%%%%%%%%%%%%%%%%%%%%%%%%%%%
\vspace{-0.5em}
\section{Introduction}\label{sec:intro}
\vspace{-0.3em}
In recent years, there has been tremendous demands of wireless services with high bandwidth. However, it is difficult to provide the required resources with a single frequency band. To address this issue, carrier aggregation (CA) technique was introduced under LTE-Advanced networks \cite{work-item}. %due to the %scarcity of the available radio spectrum. To achieve the peak data rate required by International Mobile Telecommunications-Advanced (IMT-Advanced), LTE-Advanced under the 3rd Generation Partnership Project (3GPP) specifies that UEs %support bandwidth up to 100 MHz. However, large contiguous frequency band is rarely available. To address this issue, carrier aggregation (CA) technique was introduced under LTE-A networks \cite{work-item}. With CA, UEs are allowed to aggregate multiple frequency fragments (i.e., 20 MHz each) to achieve higher transmission bandwidth. The aggregated component carriers can be contiguous or non-contiguous.
Utility PF resource allocation for a single carrier in cellular networks have been extensively studied in \cite{Ahmed_Utility1}. The problem of RA for multi-carrier systems in single cell have been given attention in recent years \cite{Dual-Decomposition, Resource_allocation, Rate_Balancing,Haya_Utility1}. The authors in \cite{Haya_Utility1} have presented multi-stage resource allocation (RA) with CA algorithms for multi-carrier cellular systems. However, non of their RA approaches have considered the problem of RB scheduling for multiple component carriers (CC)s.

In this paper, we focus on solving the problem of utility PF resource block scheduling with CA for multi-carrier cellular networks. The resource scheduling approach presented in \cite{SelfOrganizedLTE,Tugba-RB} does not consider the case of multi-carrier resources available at the eNodeB. It only solves the problem of RB scheduling in the case of single carrier. In this paper, we introduce a user grouping method that creates a user group for each CC such that each carrier assigns its RBs only to users in its user group. Each user is assigned on multiple CCs' RBs based on a user grouping method and a utility PF policy. We prove that the proposed resource scheduling policy, that is based on CA, exists and that the optimal solution is tractable.

%Our contributions in this paper are summarized as: 1) We propose a framework for the problem of utility proportional fairness RB scheduling with CA for multi-carrier cellular networks. The resource scheduling approach presented in \cite{SelfOrganizedLTE,Tugba-RB} does not consider the case of multi-carrier resources available at the eNodeB. It only solves the problem of resource block scheduling in the case of single carrier. 2) We introduce a user grouping method that creates a user group for each component carrier such that each carrier assigns its resources only to users in its user group. Each user subscribing for a mobile service is assigned on multiple component carriers' RBs based on the proposed user grouping method and a utility proportional fairness policy. 3) We prove that the proposed resource scheduling policy, that is based on CA, exists and that the optimal solution is tractable. 4) We present simulation results for the proposed resource scheduling with CA approach and compare its performance in the case of using the proposed resource scheduling policy and the case of using the scheduling policy presented in \cite{SelfOrganizedLTE}.
%
The remainder of this paper is organized as follows. In Section \ref{sec:Problem_formulation}, we describe the system model and problem setup, the utility functions of users rates and the user grouping method. Section \ref{sec:RBscheduling} presents a RB scheduling with CA optimization problem. In section \ref{sec:sim}, we present our simulation results for the proposed RB scheduling approach and compare its performance with other RB scheduling approaches. Section \ref{sec:conclude} concludes the paper.
\vspace{-0.5em}
\section{System Model and Problem Setup}\label{sec:Problem_formulation}
The transmission resources in a LTE downlink have dimensions in frequency, time and space \cite{LTEBook}. The frequency is represented by subcarriers. The time is divided into frames and each frame is further divided into subframes. The space is provided by the transmit and receive antennas. One RB consists of 12 continuous subcarriers. In reuse-1 radio systems, that is considered in this paper, a RB can be allocated to only one user.

In this paper, we consider a single cell LTE-Advanced mobile system with one evolved NodeB (eNodeB) and $M$ users. Let the number of CCs that the system can aggregate be $K$. The set of CCs is given by $\mathcal{K}=\{f_1,f_2,...,f_K\}$  with CCs in order from the highest frequency to the lowest frequency (i.e. $f_1> f_2>...>f_K$). We consider an equal power allocation (EPA) scheme that each frequency component has the same transmitting power. Furthermore, a non adjacent inter band aggregation scenario is considered. Because the channel fading for high frequency is larger than that for low frequency, higher frequency carriers have smaller coverage areas than lower frequency carriers. Users located under the coverage area of multiple carriers are scheduled resources from all in band carriers. The eNodeB assigns RBs from multiple carriers to each UE. The total allocated rate achieved by assigning RBs to the $i^{th}$ UE is given by $r_i$. Each UE has its own utility function $U_i(r_i)$  that corresponds to the type of application running on the $i^{th}$  UE. Our goal is to determine which RBs from each CC should be allocated to each UE by the eNodeB in order to maximize the total system utility while ensuring PF between utilities.

We define $\mathcal{Z}_k$, where $1\leq k \leq K$, to be the set of RBs available by $f_k$ carrier where $z_{k,j}$ denotes a single RB in $\mathcal{Z}_k=\{z_{k,1},z_{k,2},...\}$, $z_{k,j} \in \mathcal{Z}_k$ is the $j^{th}$ RB in CC $f_k$ and $|\mathcal{Z}_k|$ denotes the number of RBs available by $f_k$ carrier. The signal to noise ratio (SNR) of user $i$ on RB $z_{k,j}$ is given by $\gamma_{i,z_{k,j}} = P_{z_{k,j}} |G_{i,z_{k,j}}|^2 / N_{i,z_{k,j}}$
%\begin{equation}\label{eqn:SNR}
%\gamma_{i,z_{k,j}} = P_{z_{k,j}} |G_{i,z_{k,j}}|^2 / N_{i,z_{k,j}},
%\end{equation}
where $G_{i,z_{k,j}}$ is the complex channel gain between the eNodeB and the $i^{th}$ UE on RB $z_{k,j}$, $N_{i,z_{k,j}}$ is the noise power experienced by the $i^{th}$ UE on RB $z_{k,j}$ and $P_{z_{k,j}}$ is the transmission power that the eNodeB assigns to RB $z_{k,j}$. Under the EPA, $P_{z_{k,j}}=P_k/|\mathcal{Z}_k|$ where $P_k$ is the transmitting power of CC $f_k$. Then the achievable data rate of the $i^{th}$ user on RB $z_{k,j}$ is given by
\vspace{-0.7em}
\begin{equation}\label{eqn:datrate on z_{k,j}}
H_{i,z_{k,j}} = W \log(1+\beta_{z_{k,j}} \gamma_{i,z_k}),
\end{equation}
where $W$ is the bandwidth of a RB and $\beta_{z_{k,j}}$ is the SNR gap.%, between the practical implementations and the information theoretical results, that is needed to reach certain capacity.

In each frame, the eNodeB schedules each of the frame's RBs to one UE. Let $\phi_{i,z_{k,j}}$ be the proportion of frames that the $i^{th}$ UE is scheduled by the eNodeB on RB $z_{k,j}$. The $i^{th}$ UE rate on all RBs scheduled by carrier $f_k$ is given by
\begin{equation}\label{eqn:rate by f_k}
r_{i,f_k} = \sum_{z_{k,j}\in \mathcal{Z}_k}\phi_{i,z_{k,j}} H_{i,z_{k,j}}.
\end{equation}

The overall rate of the $i^{th}$ UE, that is the sum of the rates achieved by all carriers RBs assignments, is given by $r_i = \sum_{f_k\in \mathcal{K}}r_{i,f_k}$.
%\begin{equation}\label{eqn:rate r_i}
%r_i = \sum_{f_k\in \mathcal{K}}r_{i,f_k}.
%\end{equation}

A user grouping method is introduced in \ref{sec:UsersGrouping} to partition users into groups depending on their location in the cell. The eNodeB performs RBs assignments from each CC to the user group located in the coverage area of that carrier.
\vspace{-0.5em}
\subsection{Utility Functions of Users Rates}\label{sec:utility functions}
We express the user satisfaction with its application rates using utility functions. We represent the $i^{th}$ user application utility function $U_i (r_i)$  by  sigmoidal-like function or logarithmic function where $r_i$ is the rate of the $i^{th}$ user application. These utility functions have the following properties: 1) $U_i(0) = 0$ and $U_i(r_i)$ is an increasing function of $r_i$. 2) $U_i(r_i)$ is twice continuously differentiable in $r_i$ and bounded above.

In our model, we use the normalized sigmoidal-like utility function to represent real-time applications, same as the one presented in \cite{Ahmed_Utility1}, that is
\begin{equation}\label{eqn:sigmoid}
U_i(r_i) = c_i\Big(\frac{1}{1+e^{-a_i(r_i-b_i)}}-d_i\Big),
\end{equation}
where $c_i = \frac{1+e^{a_ib_i}}{e^{a_ib_i}}$ and $d_i = \frac{1}{1+e^{a_ib_i}}$ so it satisfies $U_i(0)=0$ and $U_i(\infty)=1$. The normalized sigmoidal-like function has an inflection point at $r_i^{\text{inf}}=b_i$. In addition, we use the normalized logarithmic utility function to represent delay-tolerant applications, same as the one used in \cite{Ahmed_Utility1}, that can be expressed as
%The normalized sigmoidal-like utility functions with $a=5$ and $b=10$, and $a=5$ and $b=10$ is shown in Figure \ref{fig:sigmoid}.
\begin{equation}\label{eqn:log}
U_i(r_i) = \frac{\log(1+k_ir_i)}{\log(1+k_ir_{\text{max}})},
\end{equation}
where $r_{\text{max}}$ gives $100\%$ utilization and $k_i$ is the slope of the curve that varies based on the user application. So, it satisfies $U_i(0)=0$ and $U_i(r_{\text{max}})=1$.
%%%%%%%%%%%%%%%%%%%%%%%%%%%%%%%%%%%%%%%%
\vspace{-0.5em}
\subsection{User Grouping Method}\label{sec:UsersGrouping}
In this section we introduce a user grouping method to create one user group $\mathcal{M}_{f_k}$ for each CC $f_k$ where $\mathcal{M}_{f_k}$ is a set of users located under the coverage area of carrier $f_k$. Users in $\mathcal{M}_{f_k}$ are assigned RBs on CC $f_k$ by the eNodeB. Users located under the coverage area of multiple carriers (i.e. common users in multiple user groups) are assigned RBs on these carriers and their final rates are aggregated under a non adjacent inter band aggregation scenario.

The $i^{th}$ user is part of user group $\mathcal{M}_{f_k}$ if it satisfies certain path loss constraints on CC $f_k$. Assume that the maximum pathloss in a carrier can not exceed a threshold $L^{th}$. In order for the eNodeB to identify a user group for each CC, it first computes the $i^{th}$ user pathloss on each CC and creates a set $\alpha_i$ that includes all in range carriers such that the $i^{th}$ user is assigned RBs only from carriers in $\alpha_i$.
%
%The $i^{th}$ user is part of user group $\mathcal{M}_{f_k}$ if it satisfies certain path loss constraints on CC $f_k$. Assume that the maximum pathloss in a carrier can not exceed a threshold $L^{th}$. In order for the eNodeB to identify a user group for each CC, it first computes the $i^{th}$ user pathloss ${L_i}^{k}$ on each CC and creates a set $\alpha_i$ that is defined as
%\begin{equation}\label{eqn:alpha_i}
%\alpha_i = \{{L_i}^{k} : {L_i}^{k}<L^{th}, 1\leq k \leq K\}, 1 \leq i \leq M.
%\end{equation}
%The path loss ${PL_i}^{k}$ can be calculated based on the channel model. In this paper we adopt the Spatial Channel Model (SCM) %in \cite{3rd-Generation}. Using SCM the path loss is given by
%\begin{equation}\label{eqn:pathloss}
%{PL_i}^{k} = 58.83 + 37.6 \log 10(d_i ) + 21 \log 10(f_k),
%\end{equation}
%where $d_i$ is the distance from the eNodeB to the $i^{th}$ user. Furthermore, the radius $R_k$ of carrier $f_k$ coverage area %can be calculated as in \cite{PF-UG} using the following equation
%\begin{equation}\label{eqn:carrier_radius}
%58.83 + 37.6 \log 10(R_k )+21 \log 10( f_k) = PL^{th},
%\end{equation}

Higher frequency carriers have smaller coverage radius $R_k$ than lower frequency carriers (i.e. $R_1<R_2<...<R_K$). Therefore, user group $\mathcal{M}_{f_1} \subseteq \mathcal{M}_{f_2} \subseteq ...\subseteq \mathcal{M}_{f_K}$. %Figure \ref{fig:SystemModel} shows one cellular cell with one eNodeB under non adjacent inter band scenario with three component carriers $f_1$, $f_2$ and $f_3$ (i.e. with $f_1>f_2>f_3$). Figure \ref{fig:SystemModel} shows $4$ UEs located under the coverage area of $f_1$, $8$ UEs under the coverage area of $f_2$ and $12$ UEs under the coverage area of $f_3$. Based on the proposed resource scheduling policy, the eNodeB assigns $\mathcal{M}_{f_1}$ users on $f_1$ RBs, $\mathcal{M}_{f_2}$ users on $f_2$ RBs and $\mathcal{M}_{f_3}$ users on $f_1$ RBs.
%%%%%%%%%%%%%%%%%%%%%%%%%%%%%%%%%%%%%%%%%
%\begin{figure}[tb]
%\centering
%\includegraphics[height=2.5in, width=2.5in]{SystemModel}
%\caption{System model for a LTE-Advanced mobile system with $M=12$ users and $K=3$ component carriers (i.e. $f_1$, $f_2$ and $f_3$) available at the eNodeB with $f_1>f_2>f_3$ and $R_1<R_2<R_3$. $\mathcal{M}_{f_1}=\{1,2,3,4\}$, $\mathcal{M}_{f_2}=\{1,2,...,8\}$ and $\mathcal{M}_{f_3}=\{1,2,...,12\}$ represent the sets of user groups located under the coverage area of carrier $f_1$, $f_2$ and $f_3$, respectively.}
%\label{fig:SystemModel}
%\end{figure}
%\vspace{-0.5em}
\section{RB Scheduling with CA Problem}\label{sec:RBscheduling}
In this section, we present our RB scheduling with CA approach. Our objective is to assign RBs to each user (i.e. the $i^{th}$ user) on all of its in range carriers (i.e. CCs in $\alpha_i$) based on a utility PF policy. We use utility functions of users' applications rates to represent the type of application running on the UE. Given that different applications may have different QoS requirements, every user subscribing for a mobile service is guaranteed to achieve minimum QoS for each of its applications with a priority criterion. Users running real-time applications are given priority when assigning RBs due to the sigmoidal-like utility functions nature used to represent their applications. In addition, our utility PF approach guarantees that no user is assigned zero RBs.

The eNodeB performs the RBs assignment for each of the CC's RBs in $\mathcal{Z}_k$. It assigns the RBs of each CC $f_k$ one at a time and one after another in ascending order of their coverage radius $R_k$. It starts with CC $f_1$ as it has the smallest coverage radius $R_1$. After assigning all users in $\mathcal{M}_{f_1}$ on $f_1$ RBs, the eNodeB then assigns users in $\mathcal{M}_{f_2}$ on $f_2$ RBs. In addition, since $\mathcal{M}_{f_1}$ users are also in $\mathcal{M}_{f_2}$ (i.e. $\mathcal{M}_{f_1} \subseteq \mathcal{M}_{f_2}$), the eNodeB assigns $\mathcal{M}_{f_1}$ users on $f_2$ RBs and the rates are aggregated based on a non adjacent inter band aggregation scenario. The eNodeB continues the RB assignment process by assigning $\mathcal{M}_{f_k}$ users on CC $f_k$ RBs. Finally, the RB assignment process is finalized by assigning carrier $f_K$ RBs to all users in the cellular cell as they are all located within its coverage radius. We consider a utility PF objective function, based on CA, that the eNodeB seeks to maximize for each time it assigns user on a RB. The utility PF resource scheduling with CA optimization problem for the eNodeB assignments of $\mathcal{M}_{f_k}$ users on $\mathcal{Z}_k$ RBs is given by
\begin{equation}\label{eqn:utility_fairness_CA}
\begin{aligned}
& \underset{\textbf{$\phi_{i,z_k}$}} {\text{max}}
& & \prod_{i=1}^{M_k} U_i\Big (c_{i,f_k}+\sum_{z_{k,j}\in \mathcal{Z}_k}(\phi_{i,z_{k,j}} H_{i,z_{k,j}})\Big ) \\
& \text{subject to}
& & \sum_{i=1}^{M_k}\phi_{i,z_{k,j}} =1, \;\;c_{i,f_1} = 0,\\
& & &  \phi_{i,z_{k,j}} \geq 0, \;\;\; i = 1,2, ...,M_k\\
%& & &  c_{i,f_1} = 0,\\
& & &  c_{i,f_k} = \sum_{l=1}^{k-1}r_{i,f_l}, \;\; k>1.
\end{aligned}
\end{equation}
where $M_k=|\mathcal{M}_{f_k}|$ is the number of UEs in the coverage area of carrier $f_k$,
$c_{i,f_1}=0$ and $c_{i,f_k}$ for $k>1$ is equivalent to $\sum_{l=1}^{k-1}r_{i,f_l}$ that is the $i^{th}$ UE total rate on all RBs scheduled by carriers $\{f_1,...,f_{k-1}\}$. The eNodeB seeks to maximize the objective function of this resource scheduling optimization problem that is achieved by maximising the product of all UEs' utilities when assigning the UEs on the carriers' RBs. The goal of this resource scheduling objective function is to allocate the resources to the UE that maximizes the total cellular network objective (i.e. the product of the utilities of all UEs) while ensuring PF between individual utilities. This objective function ensures non-zero RA for all users. Therefore, the resource scheduling optimization problem guarantees minimum QoS for all users. In addition, this approach allocates more resources to real-time applications providing improvement to the QoS of LTE system.

Later in this section we prove that there exists a tractable global optimal solution to optimization problem (\ref{eqn:utility_fairness_CA}). However, the user's final rate, achieved by assigning each user on its in range carriers' RBs, is determined using a multi-stage approach where optimization problem (\ref{eqn:utility_fairness_CA}) is required for each CC $f_k$. In addition, optimization problem (\ref{eqn:utility_fairness_CA}) needs to be applied in a multi-stage scenario starting from the carrier with the smallest coverage area (i.e. $f_1$) and ending with the carrier that has the largest coverage area (i.e. $f_K$). The rate achieved for each user after assigning CC $f_k$ RBs is needed for the next stage optimization problem (\ref{eqn:utility_fairness_CA}) of carrier $f_{k+1}$. The objective function in optimization problem (\ref{eqn:utility_fairness_CA}) is equivalent to $\arg \underset{\textbf{\textbf{$\phi_{i,z_k}$}}} \max \sum_{i=1}^{M_k}\log (U_i(c_{i,f_k}+\sum_{z_{k,j}\in \mathcal{Z}_k}(\phi_{i,z_{k,j}} H_{i,z_{k,j}})))$. The utility functions $\log(U_i(c_{i,f_k}+\sum_{z\epsilon Z}\phi_{i,b(i),z} H_{i,b(i),z}))$ that are equivalent to $\log(U_i(c_{i,f_k}+r_{i,f_k})$ are strictly concave functions as proved in \cite{Ahmed_Utility1}. As a result, optimization problem (\ref{eqn:utility_fairness_CA}) is a convex optimization problem and there exists a unique tractable global optimal solution \cite{Ahmed_Utility1,Haya_Utility1}.
%
%In optimization problem (\ref{eqn:utility_fairness_CA}), the objective function $\arg \underset{\textbf{\textbf{$\phi_{i,z_k}$}}} \max \prod_{i=1}^{M_k}U_i(c_{i,f_k}+\sum_{z_{k,j}\in \mathcal{Z}_k}(\phi_{i,z_{k,j}} H_{i,z_{k,j}}))$ is equivalent to $\arg \underset{\textbf{\textbf{$\phi_{i,z_k}$}}} \max \sum_{i=1}^{M_k}\log (U_i(c_{i,f_k}+\sum_{z_{k,j}\in \mathcal{Z}_k}(\phi_{i,z_{k,j}} H_{i,z_{k,j}})))$. The utility functions $\log(U_i(c_{i,f_k}+\sum_{z\epsilon Z}\phi_{i,b(i),z} H_{i,b(i),z}))$ that are equivalent to $\log(U_i(c_{i,f_k}+r_{i,f_k})$ are strictly concave functions as proved in \cite{Ahmed_Utility1}. As a result, optimization problem (\ref{eqn:utility_fairness_CA}) is a convex optimization problem and there exists a unique tractable global optimal solution \cite{Ahmed_Utility1,Haya_Utility1}.

In order to consider the case when the entire input is not available from the beginning, we use an online algorithm as in \cite{SelfOrganizedLTE,Tugba-RB}. The total achieved data rate of each UE when assigning it on different CCs' RBs, i.e. $r_i$, requires the knowledge of $\phi_{i,z_{k,j}}$ on each RB $z_{k,j}$ the UE is assigned on. We use an online scheduling algorithm to decrease the computation overhead while processing the rate information as in \cite{SelfOrganizedLTE}.

Let $\phi_{i,z_{k,j}}[n]$ be the proportion of the frames that UE $i$ is scheduled on RB $z_{k,j}$ in the first $n$ frames. Then, the proportion of the frames that UE $i$ is scheduled on RB $z_{k,j}$ in the $[n+1]^{th}$ frame is defined as follows:
\vspace{-0.5em}
\begin{equation}\label{eqn:online_algorithm}
\begin{aligned}
\phi_{i,z_{k,j}}[n+1]=
\begin{cases}
	\frac{n-1}{n}\phi_{i,z_{k,j}}[n]+\frac{1}{n},\\
	\text{if UE $i$ is scheduled on RB $z_{k,j}$}\\
	\text{in the $(n+1)^{th}$ frame}\\
	\frac{n-1}{n}\phi_{i,z_{k,j}}[n],\text{otherwise}.
\end{cases}
\end{aligned}
\end{equation}

In the proposed scheduling policy, for certain CC's RB $z_{k,j}$, the eNodeB schedules the UE that maximizes $\frac{U'_i(c_{i,f_k}+\sum_{z_{k,j}\in \mathcal{Z}_k}\phi_{i,z_{k,j}}H_{i,z_{k,j}})H_{i,z_{k,j}}}{U_i(c_{i,f_k}+r_{i,f_k})}$ on RB $z_{k,j}$.

\begin{lem}\label{lem:optimality}  %need to be lemma
Using the scheduling policy in (\ref{eqn:online_algorithm}), we show that $\lim \inf_{n \rightarrow \infty} \sum_{i=1}^{M_k} \log U_i(c_{i,f_k}+\sum_{z_{k,j}\in \mathcal{Z}_k}(\phi_{i,z_{k,j}}[n] H_{i,z_{k,j}}))$ exists for optimization problem (\ref{eqn:utility_fairness_CA}).
\end{lem}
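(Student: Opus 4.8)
The plan is to track the value of the (log-)objective along the online trajectory generated by~(\ref{eqn:online_algorithm}) and show that it converges to the optimal value of~(\ref{eqn:utility_fairness_CA}); since a convergent sequence has a well-defined $\liminf$, this settles the claim (and, as a bonus, identifies the limit). Write $r_{i,f_k}[n]=\sum_{z_{k,j}\in\mathcal{Z}_k}\phi_{i,z_{k,j}}[n]\,H_{i,z_{k,j}}$ and $T[n]=\sum_{i=1}^{M_k}\log U_i(c_{i,f_k}+r_{i,f_k}[n])$. First I would note that the update~(\ref{eqn:online_algorithm}) preserves feasibility: if $\sum_i\phi_{i,z_{k,j}}[n]=1$ and $\phi_{i,z_{k,j}}[n]\ge 0$ then the same holds at step $n+1$, so each $\phi[\,\cdot\,]$ is admissible for~(\ref{eqn:utility_fairness_CA}) and hence $T[n]\le T^{\star}$, where $T^{\star}$ is the optimal value of the equivalent concave program. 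Recall from~\cite{Ahmed_Utility1} that each map $r\mapsto\log U_i(c_{i,f_k}+r)$ is strictly concave and bounded above on the (compact) range of achievable rates, so $T^{\star}<\infty$.

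Next I would read the one-step dynamics off~(\ref{eqn:online_algorithm}). Letting $\Delta_i[n]$ denote the rate carrier $f_k$ grants user $i$ in frame $n+1$ (the sum of $H_{i,z_{k,j}}$ over the RBs assigned to $i$ in that frame), the update gives $r_{i,f_k}[n+1]-r_{i,f_k}[n]=\tfrac{1}{n}\big(\Delta_i[n]-r_{i,f_k}[n]\big)$, an $O(1/n)$ perturbation. A second-order Taylor expansion of each concave $\log U_i$ then yields $T[n+1]-T[n]\ \ge\ \tfrac{1}{n}\sum_{i=1}^{M_k} g_i[n]\big(\Delta_i[n]-r_{i,f_k}[n]\big)-\varepsilon_n$, where $g_i[n]=U_i'(c_{i,f_k}+r_{i,f_k}[n])/U_i(c_{i,f_k}+r_{i,f_k}[n])$ is exactly the per-RB priority weight in the scheduling rule and $\varepsilon_n=O(1/n^2)$ absorbs the curvature terms. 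The structural fact I would exploit is that the rule assigns each RB $z_{k,j}$ to $\arg\max_i g_i[n]H_{i,z_{k,j}}$, so $\sum_i g_i[n]\Delta_i[n]=\sum_{z_{k,j}}\max_i g_i[n]H_{i,z_{k,j}}\ \ge\ \sum_i g_i[n]\,r^{\star}_{i,f_k}$ for the optimal allocation $\phi^{\star}$ (the right side is a convex combination of terms no larger than their maxima). Combining this with the supporting-hyperplane inequality $\sum_i g_i[n]\big(r^{\star}_{i,f_k}-r_{i,f_k}[n]\big)\ge T^{\star}-T[n]$ gives the recursion $T[n+1]-T[n]\ \ge\ \tfrac{1}{n}\big(T^{\star}-T[n]\big)-\varepsilon_n$.

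To finish, set $e[n]=T^{\star}-T[n]\ge 0$; the recursion reads $e[n+1]\le(1-\tfrac1n)e[n]+\varepsilon_n$ with $\sum_n \tfrac1n=\infty$ and $\sum_n\varepsilon_n<\infty$. A short argument then gives $e[n]\to 0$: it cannot remain bounded away from $0$, since the $-e[n]/n$ term would otherwise drive $T[n]$ past $T^{\star}$, contradicting $T[n]\le T^{\star}$; and once $e[m]$ is small, $e[n+1]\le e[n]+\varepsilon_n$ keeps $e[n]$ below $e[m]+\sum_{j\ge m}\varepsilon_j$ for all $n\ge m$, a bound whose tail vanishes. Hence $\lim_{n\to\infty}T[n]=T^{\star}$, so $\liminf_{n\to\infty}\sum_{i=1}^{M_k}\log U_i\big(c_{i,f_k}+\sum_{z_{k,j}\in\mathcal{Z}_k}\phi_{i,z_{k,j}}[n]H_{i,z_{k,j}}\big)$ exists and equals the optimum of~(\ref{eqn:utility_fairness_CA}).

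The hard part will be the boundedness invoked in the Taylor step. For the sigmoidal-like utilities $U_i(0)=0$, so $g_i(r)=U_i'(r)/U_i(r)\sim 1/r$ and $(\log U_i)''$ blow up as $r\downarrow 0$; controlling $\varepsilon_n$ at order $1/n^2$ (indeed, keeping $T[n]$ finite at all) requires the trajectory to stay bounded away from zero rate. I would handle this either by restricting attention to the feasible region $r_{i,f_k}\ge r_{\min}>0$ dictated by the per-user minimum-QoS guarantee already built into the model — on which $g_i$ and $(\log U_i)''$ are uniformly bounded — or by proving directly that $\liminf_n r_{i,f_k}[n]>0$ for every $i$, which is plausible because $g_i[n]$ is large precisely when $r_{i,f_k}[n]$ is small, so a user whose rate drifts toward zero is immediately scheduled on its best RB. Everything else (concavity, telescoping, the $\sum 1/n^2$ bound) is routine, and the overall argument is the carrier-aggregation / multi-RB analogue of the single-carrier proofs in~\cite{SelfOrganizedLTE,Tugba-RB,Ahmed_Utility1}.
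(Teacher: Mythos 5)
Your argument is correct and in fact proves more than the lemma asks for, but it reaches the conclusion by a genuinely different route. The paper's proof of Lemma~\ref{lem:optimality} only establishes that $L(\phi[n])$ is \emph{almost monotone}: the Taylor expansion plus the greedy property (the scheduled user maximizes $U_i'H/U_i$, so the first-order term dominates the convex combination $\sum_i \phi_{i,z_{k,j}}[n]=1$) gives $L(\phi[n+1])\ge L(\phi[n])-b/n^2$, and since $\sum_n b/n^2<\infty$ and $L$ is bounded above, $L(\phi[n])$ converges to its $\limsup$; identifying that limit with the optimum is deferred to Theorem~\ref{thm:limitmax}, which is done there by contradiction. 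You instead close the loop in one pass: the same Taylor/greedy ingredients, combined with the supporting-hyperplane inequality $\sum_i g_i[n](r_{i,f_k}^{\star}-r_{i,f_k}[n])\ge T^{\star}-T[n]$, give the gap recursion $e[n+1]\le(1-\tfrac1n)e[n]+\varepsilon_n$, whose solution tends to $0$. That buys you convergence \emph{to the optimal value} directly (subsuming both the lemma and the theorem), at the cost of needing feasibility preservation and the gradient inequality explicitly; the paper's split is more modular but needs the separate contradiction argument later. Your steps all check out: the update does preserve $\sum_i\phi_{i,z_{k,j}}=1$ (exactly one user per RB per frame), the interchange $\sum_i g_i[n]r^{\star}_{i,f_k}\le\sum_{z_{k,j}}\max_i g_i[n]H_{i,z_{k,j}}$ is valid, and the recursion argument ($\liminf e[n]=0$ from divergence of $\sum 1/n$, then $e[n+1]\le e[n]+\varepsilon_n$ with summable tail) is standard. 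One remark: the boundedness issue you flag near $r_{i,f_k}\downarrow 0$ (where $(\log U_i)''$ and $g_i$ blow up for the sigmoidal utilities with $c_{i,f_1}=0$) is real, and the paper does not resolve it either --- it simply asserts the remainder bound $|\pi|<b|\Delta\phi|^2$ ``for some constant $b$'' without justifying that $b$ is uniform along the trajectory; so your proposed fixes (a minimum-rate region or showing $\liminf_n r_{i,f_k}[n]>0$ via the self-correcting priority weight) would actually tighten the published argument as well, not just yours.
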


\begin{proof}
We define $L(\phi)=\sum_{i=1}^{M_k}\log U_i(c_{i,f_k}+\sum_{z_{k,j}\in \mathcal{Z}_k}(\phi_{i,z_{k,j}} H_{i,z_{k,j}}))$ where $\phi$, $\phi[n]$ and $H$ are the short terms for $\phi_{i,z_{k,j}}$, $\phi_{i,z_{k,j}}[n]$ and $H_{i,z_{k,j}}$, respectively. Let $r_{i,f_k}[n]=\sum_{z_{k,j}}(\phi_{i,z_{k,j}}[n] H_{i,z_{k,j}})$. Using Taylor's theorem, for any $\phi$ and $\Delta\phi$ we have

$L(\phi+\Delta\phi)=L(\phi)+L'(\phi)\Delta\phi+\pi(\phi,\Delta\phi)$

where $|\pi(\phi+\Delta\phi)|<b |\Delta\phi|^2$, for some constant $b$.

Let $\Delta\phi_{i,z_{k,j}}[n]=\phi_{i,z_{k,j}}[n+1]-\phi_{i,z_{k,j}}[n]$, then
\begin{equation*}\label{eqn:delta_phi_eqn}
\begin{aligned}
\Delta\phi_{i,z_{k,j}}[n]=
\begin{cases}
	\frac{1}{n}-\frac{\phi_{i,z_{k,j}}[n]}{n},\\
	\text{if UE $i$ is scheduled on RB $z_{k,j}$}\\
	\text{in the $(n+1)^{th}$ frame}\\
	\frac{-\phi_{i,z_{k,j}}[n]}{n},\text{otherwise}.
\end{cases}
\end{aligned}
\end{equation*}

$|\Delta\phi_{i,z_{k,j}}[n]|<\frac{1}{n}$, for all $i$ and $z_{k,j}$. As a result;

%\begin{equation}\label{eqn:L_inequality}
%\begin{aligned}
%L(\phi[n+1]) & = L(\phi[n]+\Delta\phi[n]),\\
%& \geq L(\phi[n])+\Delta
%L(\phi[n])-\frac{b}{n^2},\\
%&=L(\phi[n])\\
%&+\Big(\sum_i \frac{U'_i(c_{i,f_k}+\sum_{z_{k,j}\in \mathcal{Z}_k}\phi_{i,z_{k,j}}H_{i,z_{k,j}})}{U_i(c_{i,f_k}+r_{i,f_k})}\\
%&H_{i,z_{k,j}}\Delta\phi_{i,z_{k,j}} \Big) -\frac{b}{n^2} \\
%&=L(\phi[n])\\
%&+\frac{1}{n}\Big(\max_i\frac{U'_i(c_{i,f_k}+\sum_{z_{k,j}\in \mathcal{Z}_k}\phi_{i,z_{k,j}}H_{i,z_{k,j}})}{U_i(c_{i,f_k}+r_{i,f_k})}\\
%&H_{i,z_{k,j}}-\sum_i\frac{U'_i(c_{i,f_k}+\sum_{z_{k,j}\in \mathcal{Z}_k}\phi_{i,z_{k,j}}H_{i,z_{k,j}})}{U_i(c_{i,f_k}+r_{i,f_k})}\\
%&H_{i,z_{k,j}}\phi_{i,z_{k,j}}[n]\Big)-\frac{b}{n^2}\\
%&\geq L(\phi[n])-\frac{b}{n^2},
%\end{aligned}
%\end{equation}
%where $\Delta\phi_{i,z_{k,j}}[n]$ is substituted by $(\frac{1}{n}-\frac{\phi_{i,z_{k,j}}[n]}{n})$ (i.e. user $i$ has the largest $\frac{U'_i(c_{i,f_k}+\sum_{z_{k,j}\in \mathcal{Z}_k}\phi_{i,z_{k,j}}H_{i,z_{k,j}})H_{i,z_{k,j}}}{U_i(c_{i,f_k}+r_{i,f_k})}$ among all users) and the last inequality holds since $\sum_i\phi_{i,z_{k,j}[n]}=1$ for all $i$ and $z_{k,j}$.
%based on (\ref{eqn:utility_fairness_CA}), equation (\ref{eqn:L_inequality}) can be written as
%$L(\phi[n]+\Delta\phi[n]) \geq L(\phi[n])-\frac{b}{n^2}$.
\vspace{-1em}
\begin{equation}\label{eqn:L_inequality}
\begin{aligned}
L(\phi[n+1]) & = L(\phi[n]+\Delta\phi[n]),\\
& \geq L(\phi[n])+\Delta
L(\phi[n])-\frac{b}{n^2},\\
&=L(\phi[n])
+\Big(\sum_i \frac{U'_i(c_{i,f_k}+\sum_{z_{k,j}}\phi H)}{U_i(c_{i,f_k}+r_{i,f_k})}\\
&H \Delta\phi \Big) -\frac{b}{n^2} \\
&=L(\phi[n])
+\frac{1}{n}\Big(\max_i\frac{U'_i(c_{i,f_k}+\sum_{z_{k,j}}\phi H)}{U_i(c_{i,f_k}+r_{i,f_k})}\\
&H -\sum_i\frac{U'_i(c_{i,f_k}+\sum_{z_{k,j}}\phi H)}{U_i(c_{i,f_k}+r_{i,f_k})}
H \phi[n]\Big)-\frac{b}{n^2}\\
&\geq L(\phi[n])-\frac{b}{n^2},
\end{aligned}
\end{equation}
%\vspace{-1em}
where $\Delta\phi[n]$ is substituted by $(\frac{1}{n}-\frac{\phi_{i,z_{k,j}}[n]}{n})$ (i.e. user $i$ has the largest $\frac{U'_i(c_{i,f_k}+\sum_{z_{k,j}}\phi H)H}{U_i(c_{i,f_k}+r_{i,f_k})}$ among all users) and the last inequality holds since $\sum_i\phi_{i,z_{k,j}[n]}=1$ for all $i$ and $z_{k,j}$.

Let $\beta := \lim \sup_{n \rightarrow \infty} L(\phi[n])$. For any $\epsilon > 0$, there exists large enough $N$ so that $L(\phi[N])>\beta-\frac{\epsilon}{2}$ and $\sum_{n=N}^{\infty}\frac{b}{n^2}<\frac{\epsilon}{2}$. For any $\hat{n}>N$, $L(\phi[\hat{n}]) \geq L(\phi[N])-\sum_{n=N}^{\hat{n}}\frac{b}{n^2} > \beta-\epsilon$. Therefore, $L(\phi[n])$ converges to $\beta$, as $n \rightarrow \infty$.

Due to the constraint $\sum_{i=1}^{M_k} \phi_{i,z_{k,j}} = 1$ in (\ref{eqn:utility_fairness_CA}), $\phi$ is a solution to optimization problem (\ref{eqn:utility_fairness_CA}) if and only if
\vspace{-0.5em}
\begin{equation}\label{eqn:opt_Maximization}
\begin{aligned}
\frac{dL}{d\phi_{i,z_{k,j}}}  &=\frac{U'_i(c_{i,f_k}+\sum_{z_{k,j}}\phi H)H}{U_i(c_{i,f_k}+r_{i,f_k})} \\
&=\max_m \frac{U'_m(c_{m,f_k}+\sum_{z_{k,j}}\phi_{m,z_{k,j}}H_{m,z_{k,j}})}{U_m(c_{m,f_k}+r_{m,f_k})}
H_{m,z_{k,j}},
\end{aligned}
\end{equation}
for all $i$ and $z_{k,j}$ such that $\sum_{i=1}^{M_k}\phi_{i,z_{k,j}}=1$ and $\phi_{i,z_{k,j}} \geq 0$.
\end{proof}
\vspace{-1.3em}
%%%%%%%%%%%%%%%%%%%%%%%%%%%%%%%%%%%
\begin{thm}\label{thm:limitmax}
Using the scheduling policy (\ref{eqn:opt_Maximization}), $\lim_{n \rightarrow \infty}L(\phi)[n]=\sum_{i=1}^{M_k}\log U_i(c_{i,f_k}+\sum_{z_{k,j}}(\phi_{i,z_{k,j}}[n] H_{i,z_{k,j}}))$ (i.e. $\lim_{n \rightarrow \infty}L(\phi[n])$) achieves the maximum of optimization problem (\ref{eqn:utility_fairness_CA}).
\end{thm}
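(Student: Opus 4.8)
The plan is to combine two facts already in place---that $\lim_{n\to\infty}L(\phi[n])$ exists (Lemma \ref{lem:optimality}) and that stationarity condition (\ref{eqn:opt_Maximization}) characterizes the solutions of (\ref{eqn:utility_fairness_CA}), which, since (\ref{eqn:utility_fairness_CA}) is convex, also makes that condition sufficient for global optimality---with a summability argument extracted from the per-frame improvement bound (\ref{eqn:L_inequality}). Concretely, I would introduce the nonnegative optimality gap
\[
G(\phi):=\sum_{z_{k,j}\in\mathcal{Z}_k}\Big(\max_i\frac{U_i'\,H_{i,z_{k,j}}}{U_i}-\sum_i\frac{U_i'\,H_{i,z_{k,j}}}{U_i}\,\phi_{i,z_{k,j}}\Big),
\]
with $U_i$ and $U_i'$ evaluated at $c_{i,f_k}+r_{i,f_k}$; this is $\ge 0$ for every feasible $\phi$, and $G(\phi)=0$ exactly when (\ref{eqn:opt_Maximization}) holds, i.e. when $\phi$ is optimal for (\ref{eqn:utility_fairness_CA}). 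In this notation, (\ref{eqn:L_inequality}) reads $L(\phi[n+1])\ge L(\phi[n])+\frac1n G(\phi[n])-\frac{b}{n^2}$.

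I would then sum this inequality from a large index $N$ onward: since $L(\phi[n])\to\beta<\infty$ by Lemma \ref{lem:optimality} and $\sum_n b/n^2<\infty$, the telescoped left-hand side is finite, which forces $\sum_n \frac1n G(\phi[n])<\infty$. Because the harmonic series diverges, $G(\phi[n])$ cannot stay bounded below by a positive constant, so $\liminf_{n\to\infty}G(\phi[n])=0$ and there is a subsequence $n_j$ with $G(\phi[n_j])\to 0$. To turn this into the claim I would invoke concavity of $L$ (each $\log U_i$ is concave, as recalled in Section \ref{sec:RBscheduling}): for an optimal point $\phi^\star$ of (\ref{eqn:utility_fairness_CA}) the first-order inequality gives $L(\phi^\star)-L(\phi[n])\le L'(\phi[n])(\phi^\star-\phi[n])$, and since $\sum_i\phi^\star_{i,z_{k,j}}=1$ on every RB, the right-hand side is at most $G(\phi[n])$. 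Letting $j\to\infty$ along the subsequence then gives $L(\phi^\star)-\beta\le 0$, while feasibility of each $\phi[n]$ gives $\beta=\lim_n L(\phi[n])\le L(\phi^\star)$; hence $\beta=L(\phi^\star)$ is the maximum of (\ref{eqn:utility_fairness_CA}), which is exactly the assertion. (Alternatively, one may extract a convergent subsequence $\phi[n_j]\to\bar\phi$ by compactness of the per-RB simplices, pass to the limit in $G$ by continuity to obtain $G(\bar\phi)=0$, conclude $\bar\phi$ is optimal via (\ref{eqn:opt_Maximization}), and use continuity of $L$.)

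The step I expect to be the main obstacle is ensuring that these quantities are genuinely finite and continuous near the limit: the ratios $U_i'/U_i$ diverge as a user's rate tends to $0$, so $L'$, $G$, and the first-order inequality only make sense once each user's argument $c_{i,f_k}+r_{i,f_k}[n]$ is bounded away from $0$. This must be argued from the greedy rule (\ref{eqn:opt_Maximization}): a starved user has $U_i\to 0$ while $U_i'(0)>0$, so its metric $U_i'H/U_i$ eventually dominates and the user is scheduled, keeping its rate---and hence $G(\phi[n])$---finite for all large $n$ (this is also what underlies the nonzero-allocation/minimum-QoS claim). The remaining ingredients---telescoping, harmonic divergence, and the concavity inequality---are routine.
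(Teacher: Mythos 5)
Your proof is correct and reaches the theorem by a genuinely different route than the paper. The paper argues by contradiction: if $\lim_n L(\phi[n])$ were not the maximum, there would exist $\delta,\lambda>0$ such that for every large $n$ some user--RB pair has $\phi_{i^n,z_{k,j}^n}[n]>\delta$ while its metric sits at least $\lambda$ below the per-RB maximum; the per-frame gain is then at least $\delta\lambda/(2n)$, and divergence of the harmonic series contradicts the convergence of $L(\phi[n])$ established in Lemma \ref{lem:optimality}. You instead work directly with the optimality gap $G(\phi[n])$: you show $\sum_n \frac{1}{n}G(\phi[n])<\infty$ by telescoping the per-frame inequality against the convergent limit $\beta$, extract a subsequence with $G(\phi[n_j])\to 0$, and close with the concavity bound $L(\phi^\star)-L(\phi[n])\le L'(\phi[n])(\phi^\star-\phi[n])\le G(\phi[n])$. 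What your version buys is rigor at precisely the point where the paper is loosest: the paper's negation step (from ``the limit is not the maximum'' to ``a uniform $\delta,\lambda$ violation persists for all large $n$'') is asserted rather than justified, whereas your first-order inequality converts a vanishing \emph{subsequential} gap directly into $L(\phi^\star)\le\beta$, with the reverse inequality free from feasibility, so no persistence claim is needed. The caveat you flag --- that $U_i'/U_i$ blows up as a user's argument tends to $0$, so $L'$ and $G$ must be kept finite along the trajectory --- is real but is equally present in the paper's own proof, which also evaluates $L'$ at $\phi[n]$; your observation that the greedy metric eventually forces a starved user to be scheduled is the right way to dispose of it, though neither you nor the paper writes that argument out in full.
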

\begin{proof}
Suppose $\lim_{n \rightarrow \infty}L(\phi[n])$ does not achieve the maximum of the optimization problem. There exists $\delta >   0$, $\lambda > 0$, and positive integer $N$ such that for all $n > N$, there exists some $i^n \in M_k$ and $z_{k,j}^n \in \mathcal{Z}_k$ so that $\phi_{i^n,z_{k,j}^n}[n]>\delta$ and  $\frac{U'_{i^n}(c_{i^n,f_k}+\sum_{z_{k,j}}\phi_{i^n,z_{k,j}^n}H_{i^n,z_{k,j}^n})H_{i^n,z_{k,j}^n}}{U_i(c_{i^n,f_k}+r_{i^n,f_k})}<\max_m \frac{U'_m(c_{m,f_k}+\sum_{z_{k,j}}\phi_{m,z_{k,j}^n}H_{m,z_{k,j}^n})H_{m,z_{k,j}^n}}{U_m(c_{m,f_k}+r_{m,f_k})}-\lambda$. Now we have:
\vspace{-0.7em}
\begin{equation*}\label{eqn:L_maximize}
\begin{aligned}
& L(\phi[n+1])-L(\phi[n]) \geq L'(\phi[n]))\Delta \phi[n]-\frac{b}{n^2}\\
& =\sum_{i=1}^{M_k}\frac{U'_i(c_{i,f_k}+\sum_{z_{k,j}}\phi[n]H)H}{U_i(c_{i,f_k}+r_{i,f_k})}
\Delta\phi[n]-\frac{b}{n^2}\\
& =\frac{\delta\lambda}{n}-\frac{b}{n^2} \geq \frac{\delta\lambda}{2n},
\vspace{-1em}
\end{aligned}
\end{equation*}
for large enough $n$. Since $\sum_{n=1}^\infty \frac{1}{n}=\infty$, which is a contradiction. As a result, $\lim_{n \rightarrow \infty}L(\phi[n])$ achieves the maximum of the optimization problem.
\end{proof}
\vspace{-0.08in}
\section{Simulation Results}\label{sec:sim}
%\vspace{-0.5em}
In this section we present simulation results for the proposed resource scheduling with CA approach. We consider a LTE-Advanced mobile system with $M=8$ users and two CCs $f_1$ and $f_2$ available at the eNodeB with $f_1>f_2$ as shown in Figure \ref{fig:SystemModel2}. We apply the user grouping method presented in \ref{sec:UsersGrouping} and two user groups are obtained, $\mathcal{M}_{f_1}=\{1,2,3,4\}$ and $\mathcal{M}_{f_2}=\{1,2,...,8\}$ where user $i \in \mathcal{M}_{f_k}$ represents the $i^{th}$ user located under the coverage area of carrier $f_k$. Users $\{1,2,5,6\}$ are running real-time applications that are represented by sigmoidal-like utility functions with parameters $a_i = 5$ and $b_i=10$ for users $\{1,5\}$ and $a_i = 1$ and $b_i=30$ for users $\{2,6\}$. Users $\{3,4,7,8\}$ are running delay-tolerant applications that are represented by logarithmic utility functions with parameters $k_i = 15$ for users $\{3,7\}$ and $k_i = 0.5$ for users $\{4,8\}$. The simulation was run using MATLAB.
%
%%%%%Use the text below for users utility comparission when you extend the paper%%%%%%%%%%%%%5
%We compare the performance of the resource scheduling with CA approach in the case of using the proposed utility proportional fairness (UPF) resource scheduling policy and in the case of using the traditional proportional fairness (traditional-PF) scheduling policy presented in \cite{SelfOrganizedLTE}. We assumed equal channel gain in our simulation. In Figure \ref{fig:UtilityComparison}, we show simulation results and compare the performance of different scheduling policies for users in $\mathcal{M}_{f_1}$ that are assigned RBs by carrier $f_1$ and users in $\mathcal{M}_{f_2}$ that are assigned RBs by carrier $f_1$ and $f_2$. Figure \ref{fig:UtilityComparison} shows that the user satisfaction (i.e. QoE given by utility functions) with the application rate achieved by the assigned RBs for users running real-time applications in the case of the proposed UPF scheduling policy outperforms their satisfaction in the case of the traditional-PF since real-time applications are given priority when assigning RBs using the UPF scheduling policy. On the other hand, Figure \ref{fig:UtilityComparison} shows that some of the users that are running real-time applications have the worst QoE when using the traditional-PF with equal priority weights compared to their QoE when using the traditional-PF with non equal priority weights or the UPF scheduling policy.

We compare the performance of the resource scheduling with CA approach in the case of using the proposed utility proportional fairness (UPF) resource scheduling policy and in the case of using the traditional proportional fairness (traditional-PF) scheduling policy presented in \cite{SelfOrganizedLTE}. We assume equal channel gain in our simulation. In Figure \ref{fig:PerformanceC1C2}, we show simulation results and compare the performance of different scheduling policies for users in $\mathcal{M}_{f_1}$ that are assigned RBs by carrier $f_1$ and users in $\mathcal{M}_{f_2}$ that are assigned RBs by carrier $f_1$ and $f_2$. Figure \ref{fig:PerformanceC1C2} shows the objective function of carrier $f_1$ RA optimization problem that is given by the multiplication of all users' applications quality of experience (QoE) for users in $\mathcal{M}_{f_1}$ and the objective function of carrier $f_2$ RA optimization problem when using the aforementioned scheduling policies. % that is given by the multiplication of all users applications' utilities of the aggregated rates allocated by carrier $f_1$ and $f_2$.
Figure \ref{fig:PerformanceC1C2} shows that the system performance, represented by the objective function value of the RA optimization problem that is given by the multiplication of all users applications' utilities, that represent users' satisfaction with the allocated rates in the case of the proposed UPF scheduling policy is much greater than the objective function value when using the traditional-PF scheduling policy. It also shows that the system performance when using the traditional-PF with equal priority weights is worse than the system performance when using the traditional-PF with non equal priority weights.
\begin{figure}
\centering
\vspace{-1em}
\includegraphics[height=2.2in, width=2.3in]{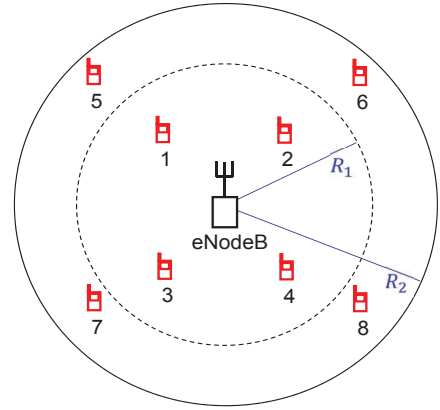}
%\caption{LTE-Advanced mobile system with $M=8$ users and $K=2$ component carriers (i.e. $f_1$ and $f_2$) available at the eNodeB with $f_1>f_2$ and $R_1<R_2$.}
\vspace{-1em}
\caption{LTE-Advanced mobile system with two component carriers (i.e. $f_1$ and $f_2$) available at the eNodeB with $f_1>f_2$ and $R_1<R_2$.}
\vspace{-0.2em}
%\myfigureshrinker{\vspace{-0.06in}}
\label{fig:SystemModel2}
\end{figure}
\begin{figure}
\centering
\vspace{-1em}
\includegraphics[height=1.85in, width=3.5in]{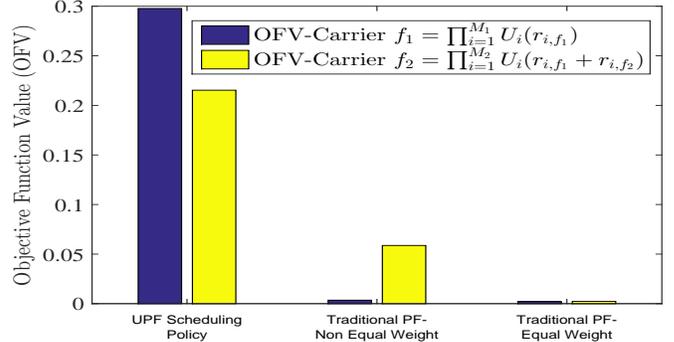}
\vspace{-1.5em}
\caption{Performance comparison for different scheduling policies represented by the objective function of carrier $f_1$ and $f_2$ RA optimization problems.}
\vspace{-1.2em}
%\myfigureshrinker{\vspace{-0.06in}}
\label{fig:PerformanceC1C2}
\end{figure}
%
%
%
%\begin{figure}[tb]
%\centering
%\includegraphics[width=3.5in]{UtilityComparison}
%\caption{Performance comparison for different scheduling policies.}
%%%\myfigureshrinker{\vspace{-0.06in}}
%\label{fig:UtilityComparison}
%\end{figure}
%\squeezeup
%
%
%\begin{figure}
%  \centering
%  \vspace{-0.8em}
%\subfigure[Objective function of carrier $f_1$ RA optimization problem representing the multiplication of all users' applications QoE for users in $\mathcal{M}_{f_1}$.]{%
%  \label{fig:Carrier1Performance}
%  \includegraphics[height=1.85in, width=3.5in]{Carrier1Performance}
% % \caption{Objective function of carrier $f_1$ RA optimization problem representing the multiplication of all users' applications QoE for users in $\mathcal{M}_{f_1}$.}
%  }\\%
%  \vspace{-0.5em}
%  \subfigure[Objective function of carrier $f_2$ RA optimization problem representing the multiplication of all users' applications QoE for users in $\mathcal{M}_{f_2}$.]{%
%  \label{fig:Carrier2Performance}
%  \includegraphics[height=1.85in, width=3.5in]{Carrier2Performance}
% % \caption{Objective function of carrier $f_2$ RA optimization problem representing the multiplication of all users' applications QoE for users in $\mathcal{M}_{f_2}$.}
%  }%
%\caption{Performance comparison for different scheduling policies.}
%\vspace{-1.5em}
%%\myfigureshrinker{\vspace{-0.06in}}
%\label{fig:Performace}
%\end{figure}
\section{Conclusion}\label{sec:conclude}
In this paper, we introduced a RB scheduling with CA approach in LTE-Advanced. Users are partitioned in user groups and each user is assigned on RBs of its corresponding in range carriers. We used utility PF with CA policy and presented users applications using utility functions. We proved that our scheduling policy exists and therefore the optimal solution is tractable. Simulation results showed that the proposed resource scheduling with CA policy achieves better QoE than the traditional proportional fairness policy. %We presented a centralized resource scheduling algorithm PFRS-CA to assign each user on different carriers RBs.
%\vspace{-0.5em}
%%%%%%%%%%%%%%***********************************************************%%%%%%%%%%%%%%5
%%%%%%%%%%%%%%Potential extension of the paper%%%%%%%%%%%%%%%%%%%%%%%%%%%%%%%%%%%%%%%%%%%
%%%%%%%%%%%% (1) Add Algorithm section most of it is already added here and commented
%%%%%%%%%%%% (2) Add figure for individual users quality of experience, it's already here but commented
%%%%%%%%%%%% (3) Add figure for utilities functions

\bibliographystyle{ieeetr}
\bibliography{pubs}
\end{document}